\def\BibTeX{{\rm B\kern-.05em{\sc i\kern-.025em b}\kern-.08em
    T\kern-.1667em\lower.7ex\hbox{E}\kern-.125emX}}
\theoremstyle{definition}
\newtheorem{theorem}{Theorem}
\newtheorem{proposition}[theorem]{Proposition}
\newlength\mylen
\DeclareMathOperator{\adjoint}{\mathsf{H}}
\DeclareMathOperator{\GL}{GL}
\DeclareMathOperator{\trace}{Tr}
\DeclareMathOperator{\diag}{diag}
\DeclareMathOperator*{\argmin}{argmin}
\begin{document}
\title{
ISS2: An Extension of Iterative Source Steering Algorithm
for Majorization-Minimization-Based Independent Vector Analysis
}
\author{\IEEEauthorblockN{Rintaro Ikeshita and Tomohiro Nakatani}
\IEEEauthorblockA{\textit{NTT Corporation, Japan}}
}
\maketitle

\begin{abstract}
A majorization-minimization (MM) algorithm for independent vector analysis optimizes a separation matrix 
$W \coloneqq [\bm{w}_1, \ldots, \bm{w}_m]^{\adjoint} \in \mathbb{C}^{m \times m}$
by minimizing a surrogate function of the form
$\mathcal{L}(W) \coloneqq \sum_{i = 1}^m \bm{w}_i^{\adjoint} V_i \bm{w}_i - \log | \det W |^2$,
where
$m \in \mathbb{N}$ is the number of sensors and
positive definite matrices $V_1,\ldots,V_m \in \mathbb{C}^{m \times m}$ are constructed in each MM iteration.
For $m \geq 3$, no algorithm has been found to obtain a global minimum of $\mathcal{L}(W)$.
Instead, block coordinate descent (BCD) methods with closed-form update formulas have been developed for minimizing $\mathcal{L}(W)$
and shown to be effective.
One such BCD is called iterative projection (IP) that updates one or two rows of $W$ in each iteration.
Another BCD is called iterative source steering (ISS) that updates one column of the mixing matrix $A \coloneqq W^{-1}$ in each iteration.
Although the time complexity per iteration of ISS is $m$ times smaller than that of IP,
the conventional ISS converges slower than the current fastest IP (called $\text{IP}_2$) that updates two rows of $W$ in each iteration.
We here extend this ISS to $\text{ISS}_2$ that can update two columns of $A$ in each iteration
while maintaining its small time complexity.
To this end, we provide a unified way for developing new ISS type methods
from which $\text{ISS}_2$ as well as the conventional ISS can be immediately obtained in a systematic manner.
Numerical experiments to separate reverberant speech mixtures show that
our $\text{ISS}_2$ converges in fewer MM iterations than the conventional ISS, and is comparable to $\text{IP}_2$.
\end{abstract}
\begin{IEEEkeywords}
independent component analysis (ICA),
independent vector analysis (IVA),
majorization-minimization (MM),
block coordinate descent (BCD)
\end{IEEEkeywords}

\section{Introduction}
% Context: Set the stage, motivate the general topic
% Gap: Explain your specific problem and why existing work does not adequately solve it
% Innovation: State what you've done that is new, and explain how it helps fill the gap

\IEEEPARstart{I}{ndependent} component analysis (ICA)~\cite{comon2010handbook}
and its extension, independent vector analysis (IVA)~\cite{kim2007},
are fundamental blind source separation (BSS) methods
that have been applied in numerous fields.
Although theoretical properties such as identifiability of ICA~\cite[Chapter 4]{comon2010handbook}\cite{afsari2008sensitivity}
and IVA~\cite{anderson2014independent,lahat2018joint,lahat2016alternative}
have been well studied,
the algorithms developed for them still need improvement
because 
fast and stable optimization is indispensable
when applied to real-world applications.

Early algorithms for ICA include 
Infomax~\cite{bell1995infomax}
and the relative (or natural) gradient method~\cite{cardoso1996equivariant,amari1996natural-gradient}.
%,cichocki1996robust}.
To accelerate these gradient-based algorithms using curvature information,
several second-order algorithms with (relative) Hessian approximation were proposed~\cite{zibulevsky2003blind,palmer2008newton,choi2007relative,ablin2018faster}.
In another research direction, a primal-dual splitting algorithm (e.g.,~\cite{komodakis2015playing}) for IVA~\cite{yatabe2018determined}
and its heuristic extension~\cite{yatabe2021determined} based on the plug-and-play scheme
were recently developed.
However, all the above algorithms rely on good policies for determining hyperparameters such as step size,
and it is usually difficult to find such policies that are suitable for any kind of signals.
Other famous methods, such as FastICA~\cite{hyvarinen1999fastica} and its improvement~\cite{ablin2018faster-O},
assume orthogonal constraint for the separated signals,
which is not necessarily optimal, especially for short signals. 

To avoid these problems,
a majorization-minimization (MM) algorithm~\cite{lange2016mm} for ICA~\cite{ono2010auxica,ablin2019stochastic}
and IVA~\cite{ono2011auxiva} without such tuning parameters as the step size
was proposed about a decade ago (see Section~\ref{sec:MM})
and has been studied extensively (mainly in the audio source separation community)
because it can attain fast and stable optimization.
Interestingly, a majorizer (or a surrogate function) constructed in the MM algorithm
had already been studied in the ICA literature~\cite{pham2001,degerine2007maxdet,yeredor2012SeDJoCo},
not related to the MM approach.

Because the majorizer is non-convex and obtaining a global minimum is difficult,
two families of block coordinate descent (BCD) methods~\cite{nocedal-Jorge2006optimization} with closed-form update formulas were developed.
One is called \textit{iterative projection (IP)}~\cite{ono2018asj,nakashima2021faster,scheibler2020ive,scheibler2021ipa,ike2021ive}
that updates one or \underline{two} rows of the separation matrix $W \in \mathbb{C}^{m \times m}$ in each BCD iteration, where $m \in \mathbb{N}$ is the number of sensors.
The other is called \textit{iterative source steering (ISS)}~\cite{scheibler2020iss}
that updates \underline{one} column of the mixing matrix $A \coloneqq W^{-1}$ in each BCD iteration.
Although ISS reduces the time complexity of IP by a factor of $1 / m$,
it requires more MM iterations to converge than the current fastest IP (called $\text{IP}_2$) that updates \underline{two} rows of $W$ in each iteration.

In this paper,
we extend the conventional ISS so that it can update
\underline{two} columns of $A \coloneqq W^{-1}$ in each iteration
while keeping its small time complexity.
The numerical simulation demonstrates the effectiveness of the proposed approach.

%\subsection{Notation}
\textbf{Notation:}
Let $\GL(m)$ be the set of all $m \times m$ nonsigular matrices over $\mathbb{C}$,
and let $\mathcal{S}_{+}^m \subset \mathbb{C}^{m \times m}$ be the set of all Hermitian positive semidefinite matrices.
For a matrix $A \in \mathbb{C}^{m \times n}$,
let 
$A^\top$ and $A^{\adjoint}$ denote the transpose and conjugate transpose of $A$,
$A_{ij}$ be the $(i,j)$th entry of $A$,
$A_{i, \bullet} \in \mathbb{C}^{1 \times n}$ be the $i$th row of $A$,
$A_{i:i+d, \bullet} \coloneqq [A_{i, \bullet}^\top, \ldots, A_{i + d, \bullet}^\top]^\top \in \mathbb{C}^{(d + 1) \times n}$,
and $\diag(A_{i, \bullet}) \in \mathbb{C}^{n \times n}$ be the diagonal matrix whose diagonal entries are $A_{i, \bullet}$.
The identity and zero matrices are denoted as $I_d \in \mathbb{C}^{d \times d}$ and $O_{i,j} \in \mathbb{C}^{i \times j}$, respectively.

\section{Background}
\subsection{Independent Vector Analysis (IVA)}

Consider a set of $K \geq 1$ linear mixtures:
\begin{align}
X^{[k]} = A^{[k]} S^{[k]} \in \mathbb{C}^{m \times n},
\quad
k = 1,\ldots,K,
\end{align}
where
$m \in \mathbb{N}$ is the number of sensors,
$n \in \mathbb{N}$ is the number of sample points,
$X^{[k]} \in \mathbb{C}^{m \times n}$ is an observation,
$S^{[k]} \in \mathbb{C}^{m \times n}$ is the original $m$ source signals,
and $A^{[k]} \in \GL(m)$ is called a \textit{mixing matrix}.
The goal of IVA is to estimate the set of the \textit{separation matrices} $W^{[k]} \in \GL(m)$, $k = 1,\ldots,K$
satisfying
\begin{align}
\label{eq:ambiguity}
W^{[k]} A^{[k]} = D^{[k]} \Pi,
\quad
k = 1,\ldots,K,
\end{align}
where $D^{[k]}$ and $\Pi$ are respectively the
arbitrary diagonal and permutation matrices of size $m \times m$
that correspond to the \textit{scale and permutation ambiguities} of separated signals $Y^{[k]} \coloneqq W^{[k]} X^{[k]}$.
Note that permutation matrix $\Pi$ must be independent of $k$ to ensure that
the orders of the $K$ separated signals $Y^{[1]},\ldots,Y^{[K]}$ are aligned between different mixtures.

To achieve the above,
IVA relies on the assumption that, for each $i = 1,\ldots,m$ and $j = 1,\ldots,n$,
the vector
\begin{align}
\bm{y}_{ij} \coloneqq [\, Y_{ij}^{[1]}, \ldots, Y_{ij}^{[K]} \,]^\top \in \mathbb{C}^K
\end{align}
follows a probability density function with
second or higher-order correlation~\cite{anderson2014independent,lahat2018joint,lahat2016alternative}.
Also, it is commonly assumed that the random variables $\{ \bm{y}_{ij} \}_{ij}$ are mutually independent.
Under this model, the negative log-likelihood,
which yields a cost function of $\mathcal{W} \coloneqq ( W^{[k]} )_{k = 1}^K$,
is expressed as:
\begin{align}
\nonumber
&\mathcal{L}_0 (\mathcal{W}) 
\coloneqq - \frac{1}{n} \log p(X^{[1]},\ldots,X^{[K]}; \mathcal{W})
\\
\label{eq:risk}
=&\,
- \frac{1}{n} \sum_{i = 1}^m \sum_{j = 1}^n \log p(\bm{y}_{ij})
- \sum_{k = 1}^K \log | \det W^{[k]} |^2.
\end{align}

\subsection{Majorization-Minimization Algorithm for IVA}
\label{sec:MM}

An MM algorithm for ICA was proposed by (Ono and Miyabe, 2010~\cite{ono2010auxica}),
rediscovered by (Ablin, Gramfort, Cardoso, and Bach, 2019~\cite{ablin2019stochastic}), and
extended for IVA by (Ono, 2011~\cite{ono2011auxiva}).
Here we briefly review it.

Let $p(\bm{y})$ be a circularly-symmetric probability density function of a random variable $\bm{y}$,
and
a function $\varphi \colon \mathbb{R}_{\geq 0} \to \mathbb{R}$ be given by
$\varphi( \| \bm{y} \|_2 ) \coloneqq - \log p(\bm{y})$
with $\| \bm{y} \|_2 \coloneqq \sqrt{ \bm{y}^{\adjoint} \bm{y} }$.
We say that $p(\bm{y})$ is \textit{super-Gaussian} if
$\varphi'(r) / r$ is decreasing on $r \in (0, \infty) = \mathbb{R}_{> 0}$,
where $\varphi'$ is the first derivative of $\varphi$
(see, e.g.,~\cite{ono2010auxica,palmer2006variational}, and \cite[pp.\ 60--61]{benveniste1990}).
For instance, a generalized Gaussian distribution (GGD)
\begin{align}
\label{eq:GGD}
\varphi(\| \bm{y} \|_2) = \| \bm{y} \|_2^\beta + \mathrm{const.}, \quad 0 < \beta < 2
\end{align}
is super-Gaussian.
GGD with $\beta = 1$ is nothing but the Laplace distribution.
For a super-Gaussian $\varphi(r)$, we have (see~\cite{ono2010auxica})
\begin{align}
\label{ineq:super-gaussian}
\varphi (r) = \min_{\alpha > 0} \Big[\, \frac{\varphi'(\alpha)}{2\alpha} \cdot r^2 + \Big(
\varphi(\alpha) - \frac{ \alpha \varphi'(\alpha) }{2}
\Big) \,\Big]
\end{align}
for all $r \in \mathbb{R}_{> 0}$
and its minimum is attained at $\alpha = r$.
Using \eqref{ineq:super-gaussian} for each 
$\varphi( \| \bm{y}_{ij} \|_2 ) \coloneqq - \log p( \bm{y}_{ij} )$ in \eqref{eq:risk},
we can develop an MM algorithm for IVA~\cite{ono2011auxiva} that alternately updates
an auxiliary variable $\Lambda \in \mathbb{R}_{\geq 0}^{m \times n}$
and
$\mathcal{W}$ by repeating
\begin{align}
\label{eq:lambda}
\Lambda_{ij} &\leftarrow
\frac{
    \varphi'(\| \bm{y}_{ij} \|_2)
}{ 
    \| \bm{y}_{ij} \|_2 
},
\quad i = 1,\ldots,m;~ j=1,\ldots,n,
\\
\label{problem:W}
W^{[k]} &\in \argmin_{W^{[k]} \in \GL(m)} \mathcal{L}^{[k]} (W^{[k]}, \Lambda),
\quad k = 1,\ldots,K,
\end{align}
where we define
\begin{align}
\label{eq:L}
&\hspace{-8 mm}
\mathcal{L}^{[k]}(W^{[k]}, \Lambda) = \sum_{i = 1}^m
(\bm{w}_i^{[k]})^{\adjoint} V_i^{[k]} \bm{w}_i^{[k]} - \log | \det W^{[k]} |^2,
\\
\label{eq:Vi}
V_i^{[k]} &= \frac{1}{2n} X^{[k]} \diag(\Lambda_{i, \bullet}) \, (X^{[k]})^{\adjoint} \in \mathcal{S}_+^m.
\\
\label{eq:w}
\bm{w}_i^{[k]} &= (W_{i, \bullet}^{[k]})^{\adjoint} 
\quad (\Leftrightarrow W^{[k]} = [\bm{w}_1^{[k]},\ldots,\bm{w}_m^{[k]}]^{\adjoint}),
\end{align}
Note that $\Lambda_{i, \bullet} \in \mathbb{R}_{\geq 0}^{1 \times n}$ in \eqref{eq:Vi} is the $i$th row of $\Lambda$.

When $m = 2$ and $W^{[k]} \in \mathbb{C}^{2 \times 2}$, problem \eqref{problem:W} has a closed-form solution~\cite{ono2012auxiva-stereo,degerine2007maxdet}.
However, for $m \geq 3$, no algorithm has been found that obtains a global minimum of \eqref{problem:W},
and several BCD algorithms were developed.
In this paper, we refer to such MM-based IVA approaches as \textit{MM+BCD}.

Hereafter, for ease of notation,
we omit the upper right index $\cdot^{[k]}$ 
when discussing~\eqref{problem:W}--\eqref{eq:w}
and simply denote the objective function $\mathcal{L}^{[k]}(W^{[k]}, \Lambda)$ as $\mathcal{L}(W)$.

\section{Proposed MM+BCD Algorithm}
\label{sec:ISS}

We generalize the definition of
iterative source steering (\text{ISS}) to be a family of MM+BCD algorithms that update several columns of $A \coloneqq W^{-1}$ in each iteration
based on the minimization of $\mathcal{L}(W)$ with respect to those columns.
The conventional \text{ISS}~\cite{scheibler2020iss} (called $\text{ISS}_{1}$) updates
one column of $A$ in each iteration.
We extend this $\text{ISS}_1$ to $\text{ISS}_2$ so that it can update
two columns of $A$ in each iteration.
To this end, we newly provide a unified way to develop $\text{ISS}_d$ for any $d \geq 1$.

\subsection{Definition of $\text{ISS}_d$}
\label{sec:iss:def}

Let $d$ be a divisor of $m$ and $L \coloneqq m / d$.
Consider the partition of $A$ into $L$ submatrices with $d$ columns:
\begin{align}
A = [\, \underbrace{A^{(1)}}_{d} \mid \cdots \mid \underbrace{A^{(L)}}_{d} \,] \in \mathbb{C}^{m \times m}.
\end{align}
%
%For instance, $A^1$ is the first $d$-columns of $A$.
$\text{ISS}_d$ is an MM+BCD method that cyclically updates
\begin{align}
\Lambda \to (W, A^{(1)}) \to (W, A^{(2)}) \to \cdots \to (W, A^{(L)})
\end{align}
one by one based on \eqref{eq:lambda} for updating $\Lambda$ and
\begin{align}
\label{problem:iss:d}
(W, A^{(\ell)}) \in \argmin_{(W,\, A^{(\ell)})} \, \{ \mathcal{L} (W) \mid W A = I_m \}
\end{align}
for updating $(W, A^{(\ell)})$ with $\ell = 1,\ldots,L$.
When $d = 1$,
our definition of $\text{ISS}_1$ coincides with the conventional $\text{ISS}_1$~\cite{scheibler2020iss}.

\subsection{Multiplicative update (MU) formulation for $\text{ISS}_d$}
\label{sec:ISS:MU}

We show that $\text{ISS}_d$ can be written as a multiplicative update (MU) algorithm for $W$ (or equivalently $Y = WX$).
%which is summarized in Algorithm~\ref{alg:iss:pseudo}.
To begin with, we provide the following proposition.
\begin{proposition}
Update rule~\eqref{problem:iss:d} with $\ell = 1$ is equivalent to the following MU rule for $W$ (and $A$):
\begin{align}
\label{eq:iss:mu:1}
T & \in \argmin_T \left\{ \mathcal{L} (TW) \mid T \in \mathcal{D}_{\text{ISS}_d} \right\},
\\
\label{eq:iss:mu:2}
W &\leftarrow T W \quad(\text{and} ~ A \leftarrow A T^{-1}),
\end{align}
where we define
\begin{align*}
\mathcal{D}_{\text{ISS}_d} \coloneqq
\left\{
\begin{bmatrix}
P & O_{d, m - d} \\
Q & I_{m - d}
\end{bmatrix}    
\;\middle|\;
P \in \GL(d),
\,
Q \in \mathbb{C}^{(m - d) \times d}
\right\}.
\end{align*}
\end{proposition}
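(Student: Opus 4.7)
The plan is to parametrize the feasible set of \eqref{problem:iss:d} with $\ell = 1$ by the left-multiplicative factor $T \coloneqq W_{\text{new}} W^{-1}$ that relates the updated $W_{\text{new}}$ to the current iterate $W$, and to show that the admissible $T$'s are precisely $\mathcal{D}_{\text{ISS}_d}$.

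First I would split $A = [\, A^{(1)} \mid B \,]$, where $B \coloneqq [\, A^{(2)} \mid \cdots \mid A^{(L)} \,] \in \mathbb{C}^{m \times (m - d)}$ is frozen during this step. The constraint $W_{\text{new}} A_{\text{new}} = I_m$ then decomposes into $W_{\text{new}} A^{(1)}_{\text{new}} = \bigl[\begin{smallmatrix} I_d \\ O \end{smallmatrix}\bigr]$ and $W_{\text{new}} B = \bigl[\begin{smallmatrix} O \\ I_{m - d} \end{smallmatrix}\bigr]$. Because $A^{(1)}_{\text{new}}$ is a free variable, the first equation imposes no restriction on $W_{\text{new}}$; it merely recovers $A^{(1)}_{\text{new}}$ as the first $d$ columns of $W_{\text{new}}^{-1}$. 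Hence \eqref{problem:iss:d} reduces to minimizing $\mathcal{L}(W_{\text{new}})$ over $W_{\text{new}} \in \GL(m)$ subject to $W_{\text{new}} B = W B$, since by hypothesis $W B = \bigl[\begin{smallmatrix} O \\ I_{m-d} \end{smallmatrix}\bigr]$ already holds.

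Next I would partition $T$ conformally as $T = \bigl[\begin{smallmatrix} P & R \\ Q & S \end{smallmatrix}\bigr]$ with $P \in \mathbb{C}^{d \times d}$ and $S \in \mathbb{C}^{(m - d) \times (m - d)}$. The preservation requirement $T (W B) = W B$ forces $\bigl[\begin{smallmatrix} R \\ S \end{smallmatrix}\bigr] = \bigl[\begin{smallmatrix} O \\ I_{m - d} \end{smallmatrix}\bigr]$, i.e., $R = O_{d, m - d}$ and $S = I_{m - d}$. Invertibility of $W_{\text{new}} = T W$ is then equivalent to invertibility of $T$, which, by block-triangular determinant expansion, reduces to $P \in \GL(d)$; hence $T \in \mathcal{D}_{\text{ISS}_d}$. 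Conversely, any $T \in \mathcal{D}_{\text{ISS}_d}$ satisfies $T(WB) = WB$ and is invertible, so $T \mapsto W_{\text{new}} = T W$ is a bijection between $\mathcal{D}_{\text{ISS}_d}$ and the reduced feasible set for $W_{\text{new}}$. The companion update $A \leftarrow A T^{-1}$ falls out automatically from $A_{\text{new}} = W_{\text{new}}^{-1} = W^{-1} T^{-1}$.

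The only nontrivial step is the blockwise identification of $T$; the remaining bookkeeping is immediate once one notes that $\mathcal{L}(W_{\text{new}}) = \mathcal{L}(T W)$ by definition of $T$, so the reduced minimization problem coincides exactly with \eqref{eq:iss:mu:1}. I do not foresee a substantive obstacle; the argument is essentially a change of variables, and the asymmetric placement of the identity block in $\mathcal{D}_{\text{ISS}_d}$ (bottom-right rather than top-left) is explained precisely by the fact that it is $A^{(1)}$ (the \emph{first} $d$ columns of $A$) that is free, so it is the \emph{last} $m - d$ coordinates that $T$ must fix.
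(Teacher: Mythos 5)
Your proof is correct and takes essentially the same route as the paper: both arguments are a multiplicative change of variables $T = W_{\text{new}}W^{-1}$ (equivalently $T^{-1} = A^{-1}A_{\text{new}}$) that identifies the feasible set of \eqref{problem:iss:d} with $\mathcal{D}_{\text{ISS}_d}$, using the block structure forced by the frozen columns $[A^{(2)}\mid\cdots\mid A^{(L)}]$. The only (cosmetic) difference is direction: you derive the form of $T$ from the $W$-side constraint $T(WB)=WB$, whereas the paper posits $T^{-1}\in\mathcal{D}_{\text{ISS}_d}$ on the $A$-side and then solves for $W_{\text{new}}=TW$; if anything, your version makes the exhaustiveness of the parametrization slightly more explicit.
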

\begin{proof}
By the update of $A^{\text{new}} \leftarrow A T^{-1}$ with $T^{-1} \in \mathcal{D}_{\text{ISS}_d}$,
$A^{(1)}$ can take an arbitrary value
while $[A^{(2)},\ldots,A^{(L)}]$ remains unchanged.
To keep the constraint $W A = I_m$ in \eqref{problem:iss:d}, $W$ must be uniquely updated to $W^{\text{new}} \leftarrow TW$:
\begin{align*}
\underbrace{
\left[
\begin{array}{cc}
P^{-1} & O_{d,m-d} \\
- Q P^{-1} & I_{m-d}
\end{array}
\right]
W
}_{W^{\text{new}} \;=\; T W}
\;
\underbrace{
A
\left[
\begin{array}{cc}
P & O_{d,m-d} \\
Q & I_{m-d}
\end{array}
\right]
}_{A^{\text{new}} \;=\; A T^{-1}}
= I_m.
\end{align*}
(Note that the set $\mathcal{D}_{\mathrm{ISS}_d}$ is closed under matrix inversion.)
This $T$ belongs to and runs over $\mathcal{D}_{\text{ISS}_d}$
when $T^{-1}$ runs over $\mathcal{D}_{\text{ISS}_d}$.
%which completes the proof.
Thus, Eq.~\eqref{problem:iss:d} with $\ell = 1$ is equivalent to \eqref{eq:iss:mu:1}--\eqref{eq:iss:mu:2}.
\end{proof}

We next show that Eq.~\eqref{problem:iss:d} with $\ell \in \{2,\ldots,L \}$
can also be rewritten in the same way as \eqref{eq:iss:mu:1}--\eqref{eq:iss:mu:2}
by properly permuting the rows of $W$ and the columns of $A$ in advance.
To see this, let us define a (block) permutation matrix
\begin{align}
\label{eq:permu-mat}
{
%\small
\Pi_d =
\left[\begin{array}{c|ccc}
& I_d & & \\
& & \ddots & \\
& & & I_d \\ \hline
I_d & & &
\end{array}\right]
\in \mathbb{C}^{m \times m}
}
\end{align}
and permute the rows of $(W, Y, \Lambda)$ and columns of $A$ by
\begin{align*}
%\label{eq:permu:row}
W &\leftarrow \Pi_d^{\ell - 1} W,
\quad
Y \leftarrow \Pi_d^{\ell - 1} Y,
\quad
\Lambda \leftarrow \Pi_d^{\ell - 1} \Lambda,
\\
%\label{eq:permu:col}
A &\leftarrow A (\Pi_d^{\ell - 1})^\top = [\,A^{(\ell)}, \ldots, A^{(L)}, A^{(1)},\ldots,A^{(\ell - 1)}\,].
\end{align*}
This permutation keeps both the objective function and constraint $WA = I_m$ in \eqref{problem:iss:d} since $\Pi_d \Pi_d^\top = I_m$.
Also, the first $d$ columns of $A (\Pi_d^{\ell - 1})^\top$ are $A^{(\ell)}$.
Thus, Eq.~\eqref{problem:iss:d} with $\ell \geq 2$
is also essentially equivalent to \eqref{eq:iss:mu:1}--\eqref{eq:iss:mu:2}.
%by replacing $(W, A, Y, \Lambda)$ with \eqref{eq:permu:row}--\eqref{eq:permu:col} in advance.
Due to this observation, we only need to address problem~\eqref{eq:iss:mu:1} below.

\subsection{Derivation of $\text{ISS}_2$ (and new derivation of $\text{ISS}_1$)}
\label{sec:ISS:alg}

We discuss problem \eqref{eq:iss:mu:1} for general $d \geq 1$
and develop a closed-form solution for it when $d = 2$ (proposed $\text{ISS}_2$) and $d = 1$ (new derivation of $\text{ISS}_1$).

For $T \in \mathcal{D}_{\text{ISS}_d}$,
the $i$th row vector of $P$ (resp.\ $Q$) is denoted as $\bm{p}_i^{\adjoint} \in \mathbb{C}^{1 \times d}$ (resp.\ $\bm{q}_{d + i}^{\adjoint} \in \mathbb{C}^{1 \times d}$):
\begin{align}
P &= [\bm{p}_1, \ldots, \bm{p}_d]^{\adjoint} \in \mathbb{C}^{d \times d},
\\
Q &= [\bm{q}_{d + 1}, \ldots, \bm{q}_{m}]^{\adjoint} \in \mathbb{C}^{(m - d) \times d}.
\end{align}
Then the objective function $\mathcal{L}(TW)$ can be expressed as
\begin{align}
\nonumber
\mathcal{L} (T W) =&\;
\sum_{i = 1}^d \bm{p}_i^{\adjoint} G_i \bm{p}_i - \log | \det P |^2
\\
\label{eq:ISS:loss}
&+
\sum_{i = d + 1}^m
\begin{bmatrix}
\bm{q}_i \\
1
\end{bmatrix}^{\adjoint}
\begin{bmatrix}
G_i & \bm{g}_i \\
\bm{g}_i^{\adjoint} & c_i
\end{bmatrix}
\begin{bmatrix}
\bm{q}_i \\
1
\end{bmatrix}
+
\text{const.},
\end{align}
where for each $i = 1,\ldots,m$,
\begin{align}
\nonumber
G_i &= W_{1:d,\bullet} V_i W_{1:d,\bullet}^{\adjoint}
= \frac{1}{2n} Y_{1:d,\bullet} \diag(\Lambda_{i, \bullet}) \, (Y_{1:d,\bullet})^{\adjoint} \in \mathbb{C}^{d \times d},
\\
\nonumber
\bm{g}_i &= W_{1:d,\bullet} V_i W_{i,\bullet}^{\adjoint} 
= \frac{1}{2n} Y_{1:d,\bullet} \diag(\Lambda_{i, \bullet}) \, (Y_{i,\bullet})^{\adjoint} \in \mathbb{C}^{d \times 1},
\end{align}
and $c_i \in \mathbb{C}$ is constant.
Since the variables $P$ and $Q$ are split in \eqref{eq:ISS:loss}, we can optimize them separately.

\subsubsection{Optimization of $Q$ for general $d \geq 1$}

Since the objective function $\mathcal{L}(TW)$ is quadratic with respect to $Q$,
and $G_{d + 1}, \ldots, G_m$ are positive definite in general (if $d \leq n$),
the global optimal solution for $Q$ is obtained as
\begin{align}
\bm{q}_i = - G_i^{-1} \bm{g}_i \in \mathbb{C}^{d \times 1}, \quad i = d + 1, \ldots, m.
\end{align}
Along with this, the separated signals are updated as
\begin{align}
\label{eq:Yi-update}
Y_{i, \bullet} &\leftarrow Y_{i, \bullet} + \bm{q}_i^{\adjoint} Y_{1:d,\bullet} = Y_{i, \bullet} - \bm{g}_i^{\adjoint} G_i^{-1} Y_{1:d,\bullet} \in \mathbb{C}^{1 \times n}
\end{align}
for each $i = d + 1,\ldots,m$.
Note that in $\text{ISS}_d$ we only need to update $(Y, \Lambda)$ but not $W$ since the surrogate function $\mathcal{L} (TW)$
given by \eqref{eq:ISS:loss} can be constructed from $(Y, \Lambda)$ only.

\subsubsection{Optimization of $P$ for $d \geq 3$}

We want to solve
\begin{align}
\label{problem:maxdet}
P \in \argmin_{P \in \GL(d)} \; \sum_{i = 1}^d \bm{p}_i^{\adjoint} G_i \bm{p}_i - \log | \det P |^2.
\end{align}
However, as mentioned in Section~\ref{sec:MM},
obtaining a global minimum of~\eqref{problem:maxdet} for $d \geq 3$ is a long-standing open problem~\cite{degerine2007maxdet},
and we leave this task for future work.

\subsubsection{Optimization of $P$ for $d = 2$ ($\text{ISS}_2$ case)}
\label{sec:ISS2}

When $d = 2$,
problem~\eqref{problem:maxdet} is known to have a closed-form solution~\cite{ono2012auxiva-stereo}:
\begin{align}
\label{eq:ip2:1}
H &= G_1^{-1} G_2 \in \mathbb{C}^{2 \times 2},
\\
\nonumber
\theta_1 &= \frac{\trace(H) + \sqrt{ (\trace(H))^2 - 4 \, \operatorname{det}(H) }}{2},
\quad
\theta_2 = \frac{\det H}{\theta_1},
\\
\bm{u}_1 &= \begin{bmatrix}
    H_{22} - \theta_1 \\
    - H_{21}
\end{bmatrix},
\quad
\bm{u}_2 = \begin{bmatrix}
    - H_{12} \\
    H_{11} - \theta_2
\end{bmatrix} \in \mathbb{C}^{2 \times 1},
\\
\label{eq:ip2:2}
\bm{p}_i &= \frac{\bm{u}_i}{(\bm{u}_i^{\adjoint} G_i \bm{u}_i)^{\frac{1}{2}}} \in \mathbb{C}^{2 \times 1}, \quad i = 1,2.
\end{align}
Along with this,
the separated signals are updated by $Y_{1:2,\bullet} \leftarrow P Y_{1:2,\bullet}$.
The proposed $\text{ISS}_2$ is summarized in Algorithm~\ref{alg:main}.

\subsubsection{Optimization of $P$ for $d = 1$ ($\text{ISS}_1$ case)}
\label{sec:ISS1}

When $d = 1$, $p_1 = G_1^{-\frac{1}{2}}$ gives a global minimum of \eqref{problem:maxdet}.
The obtained $\text{ISS}_1$ is identical to the conventional $\text{ISS}_1$~\cite{scheibler2020iss}.
Our new derivation has an advantage of providing a systematic way to discuss $\text{ISS}_d$,
which enabled us to generalize $\text{ISS}_1$ to $\text{ISS}_2$ as above.

\begin{algorithm}[t]
\caption{IVA by $\text{ISS}_2$}
\label{alg:main}
{
\setstretch{1.1}
\SetNoFillComment
\DontPrintSemicolon
\KwIn{$X^{[k]} \in \mathbb{C}^{m \times n}$ $(k = 1,\ldots,K)$}
\nl Initialize $W^{[k]}$ as a whitening matrix for $k = 1,\ldots,K$.\;
\nl $Y^{[k]} \leftarrow W^{[k]} X^{[k]}$ for each $k = 1,\ldots,K$.\;
\nl\Repeat (\tcp*[f]{outer MM loop}) {\rm some convergence criterion is met}{
    \nl $\Lambda_{ij} \leftarrow \varphi'(\| \bm{y}_{ij} \|_2 + \varepsilon) \,/\, (\| \bm{y}_{ij} \|_2 + \varepsilon)$, where $\varepsilon = 10^{-10}$ is added to improve numerical stability.\;
    %\nl $\Lambda_{ij} \leftarrow \Lambda_{ij} + 10^{-10}$ for numerical stability\;
    \nl\For (\tcp*[f]{inner BCD loop}) {\rm $\ell = 1, \ldots, \frac{m}{2}$}{
        \nl\For{$k = 1, \ldots, K$}{
            \nl\tcc{\textcolor{black}{Update $Y_{3:m,\bullet}^{[k]} \in \mathbb{C}^{(m - 2) \times n}$}}
            \nl\For{$i = 3, \ldots, m$}{
                {\setstretch{1.5}
                    \nl $G_i^{[k]} = \frac{1}{2n} Y_{1:2,\bullet}^{[k]} \diag(\Lambda_{i, \bullet}) \, (Y_{1:2,\bullet}^{[k]})^{\adjoint}$\;
                    \nl $\bm{g}_i^{[k]} = \frac{1}{2n} Y_{1:2,\bullet}^{[k]} \diag(\Lambda_{i, \bullet}) \, (Y_{i, \bullet}^{[k]})^{\adjoint}$\;
                    \nl $Y_{i, \bullet}^{[k]} \leftarrow Y_{i, \bullet}^{[k]} - (\bm{g}_i^{[k]})^{\adjoint} (G_i^{[k]})^{-1} Y_{1:2,\bullet}^{[k]}$\;
                }
            }
            \nl\tcc{\textcolor{black}{Update $Y_{1:2,\bullet}^{[k]} \in \mathbb{C}^{2 \times n}$}}
            \nl\For{$i = 1, 2$}{
                \nl $G_i^{[k]} = \frac{1}{2n} Y_{1:2,\bullet}^{[k]} \diag(\Lambda_{i, \bullet}) \; (Y_{1:2,\bullet}^{[k]})^{\adjoint}$\;
            }
            \nl Update $P^{[k]} \in \mathbb{C}^{2 \times 2}$ using \eqref{eq:ip2:1}--\eqref{eq:ip2:2}.\;
            \nl $Y_{1:2,\bullet}^{[k]} \leftarrow P^{[k]} Y_{1:2,\bullet}^{[k]} \in \mathbb{C}^{2 \times n}$\;
        }
        \nl\tcc{\textcolor{black}{Permute rows}}
        \nl $\Lambda \leftarrow \Pi_2 \Lambda$, where $\Pi_2$ is defined as \eqref{eq:permu-mat}.\;
        \nl $Y^{[k]} \leftarrow \Pi_2 Y^{[k]}$ for $k = 1,\ldots,K$.\;
    }
}
\KwOut{$Y^{[k]} \in \mathbb{C}^{m \times n}$ $(k = 1,\ldots,K)$}
}
\end{algorithm}

\section{Relation to Prior MM+BCD Algorithms}

\textit{Iterative projection (\text{IP})} is a family of MM+BCD algorithms that optimize several
\underline{rows of $W$} in each iteration with closed-form update formulas.
So far, $\text{IP}_1$~\cite{ono2010auxica,ono2011auxiva,ablin2019stochastic} and $\text{IP}_2$~\cite{ono2018asj}
(see also~\cite{nakashima2021faster,scheibler2020ive,scheibler2021ipa,ike2021ive}) have been developed as members of $\text{IP}$.

$\text{IP}_1$ is an MM+BCD that updates 
$\Lambda \to \bm{w}_1 \to \cdots \to \bm{w}_m$
one by one.
% (the index $\empty^{[k]}$ is omitted for the sake of simplicity).
The update rule for $\Lambda$ is given by \eqref{eq:lambda}
and that for $\bm{w}_{\ell} \coloneqq W_{\ell, \bullet}^{\adjoint}$ can be developed as
\begin{align}
\nonumber
\bm{u}_\ell &\leftarrow (W V_\ell)^{-1} \bm{e}_\ell \in \mathbb{C}^{m \times 1},
\quad
\bm{w}_\ell \leftarrow \frac{\bm{u}_\ell}{(\bm{u}_\ell^{\adjoint} V_\ell \bm{u}_\ell)^{\frac{1}{2}}} \in \mathbb{C}^{m \times 1},
\end{align}
where $V_{\ell}$ is defined by \eqref{eq:Vi}
and $\bm{e}_{\ell}$ is the $\ell$-th column of $I_m$.

$\text{IP}_2$ is an MM+BCD that updates
$\Lambda \to [\bm{w}_1,\bm{w}_2] \to \cdots \to [\bm{w}_{m - 1}, \bm{w}_m]$
one by one (when $m$ is even), which improves $\text{IP}_1$
(see, e.g.,~\cite{nakashima2021faster,scheibler2020ive,scheibler2021ipa,ike2021ive} for details).
%(Due to space limitations, the derivation is omitted here.)

%\subsubsection{Iterative Projection with Adjustment (IPA) Algorithm}
Recently, an advanced algorithm called \textit{iterative projection with adjustment (IPA)} was proposed~\cite{scheibler2021ipa}.
However, unlike IP and ISS, no (fully) closed-form update formula has existed for IPA,
because it requires a root-finding algorithm (and for this purpose the Newton-Raphson method is used~\cite{scheibler2021ipa}).
Although IPA is important, we will not compare it with IP and ISS in our experiments,
since we are focusing on such methods with fully closed-form update formulas.

\section{Time Complexity Analysis}

The computational time complexity of $\text{ISS}_2$ per MM iteration is dominated by
\begin{itemize}
\item the computation of $(G_i^{[k]}, \bm{g}_i^{[k]})$ for each $i = 1,\ldots,m$ and loop $\ell = 1,\ldots,\frac{m}{2}$,
which costs $\mathrm{O}(K m^2 n)$; and
\item the computation of $Y^{[k]}$, which costs $\mathrm{O}(K m^2 n)$.
\end{itemize}
Thus, $\text{ISS}_2$ has the time complexity of $\mathrm{O}(K m^2 n)$, which is the same as that of $\text{ISS}_1$.
For comparison, the time complexity of $\text{IP}_1$ with $d \in \{1, 2\}$ per iteration is dominated by
(e.g., \cite{scheibler2021ipa})
\begin{itemize}
\item the computation of covariance matrices $V_1^{[k]},\ldots,V_m^{[k]} \in \mathcal{S}_{+}^m$,
which costs $\mathrm{O}(K m^3 n)$; and
\item the computation of updating $W^{[k]}$, which costs $\mathrm{O}(K m^4)$.
\end{itemize}
Thus, $\text{IP}_d$ ($d \in \{1, 2\}$) has the time complexity of $\mathrm{O}(K m^3 n + K m^4)$, which is $m$ times larger than $\text{ISS}_d$ with $d \in \{ 1, 2\}$.

\section{Experiments}
\label{sec:exp}

We compared the performance of our proposed $\text{ISS}_2$ and conventional $\text{ISS}_1$, $\text{IP}_1$, and $\text{IP}_2$ when applied to convolutive blind source separation (BSS) in the short-time Fourier transform (STFT) domain~\cite{vincent2018book},
where $K$ and $n$ correspond to the numbers of frequency bins and time frames, respectively.
This setting is very common in audio source separation~\cite{vincent2018book}.

%\subsection{Experimental conditions}

\textbf{Dataset:}
We generated synthesized convolutive mixtures of $m \in \{ 4, 6, 8, 10 \}$ speech signals.
The signals were captured by a circular array with $m$ microphones and a radius of 5 cm.
We obtained speech signals from the TIMIT corpus~\cite{timit}
and concatenated them so that the signal length exceeded 10 seconds.
The obtained signals were normalized to have unit power.
To obtain acoustic impulse responses (AIR),
we used the \verb|pyroomacoustics| Python package~\cite{scheibler2018pyroomacoustics}
and simulated 100 rectangular rooms.
The rooms were 5 to 8 m wide and 3~to 5~m high.
The arrays were placed in the center of the rooms at a height of 1~m.
The speech sources were randomly placed in the room at a height of 1~m,
provided that the distances from the array center and the walls were at least 1~m.
The reverberation times ($\text{T}_{60}$) ranged from 250 to 400~ms.

\textbf{Evaluation criterion:}
We measured the signal-to-distortion ratio (SDR)~\cite{vincent2006sdr}
between separated signal $\hat{\bm{s}}$ and oracle reverberant speech signal $\bm{s}$ at the first microphone.
The SDR we used here is sometimes called the scale-invariant SDR~\cite{le2019sdr} and defined as
$\text{SDR [dB]} = 10 \log_{10} \frac{ \| \alpha \bm{s} \|_2^2 }{ \| \bm{\hat{s}} - \alpha \bm{s} \|_2^2 }$
with
$\alpha = \frac{\bm{\hat{s}}^\top \bm{s}}{ \| \bm{s} \|_2^2 }$.

\textbf{Other conditions:}
The sampling rate was 16 kHz,
the STFT frame size was 4096 (256 ms),
and the frame shift was 1024 (64 ms).
We assumed a Laplace distribution, i.e., \eqref{eq:GGD} with $\beta = 1$, for the separated signals.
We initialized $W^{[k]}$ as the whitening matrix
$D^{-1/2} U^{\adjoint}$ using the eigenvalue decomposition
$U D U^{\adjoint} = \frac{1}{n} X^{[k]} (X^{[k]})^{\adjoint}$ for each $k = 1,\ldots,K$.
After separation,
the scale ambiguity of IVA, i.e., \eqref{eq:ambiguity}, was restored based on
the minimum distortion principle (MDP)~\cite{matsuoka2001projection-back}
(see also~\cite[Section 2.2]{scheibler2020projection-back} for the details of MDP).

\textbf{Experimental results:}
Figure~\ref{fig:exp-res} shows the SDR improvement obtained by each method.
As we desired,
the convergence of the proposed $\text{ISS}_2$ is much faster than $\text{ISS}_1$ and $\text{IP}_1$
and comparable to $\text{IP}_2$
(note that the SDR curves of $\text{IP}_2$ and $\text{ISS}_2$ almost overlap),
which clearly shows the effectiveness of our approach.
Since the time complexity of $\text{ISS}_2$ is $m$ times smaller than $\text{IP}_2$,
one might expect that
the runtime of $\text{ISS}_2$ to reach convergence is shorter than that of $\text{IP}_2$;
but this was not the case in our experiment with our Python implementation where
the runtime of $\text{ISS}_2$ was slightly inferior to that of $\text{IP}_2$.
This implementation issue is an important future work.

\begin{figure}[!t]
\centering
\begin{subfigure}{0.49\linewidth}
\includegraphics[width=1.0\linewidth]{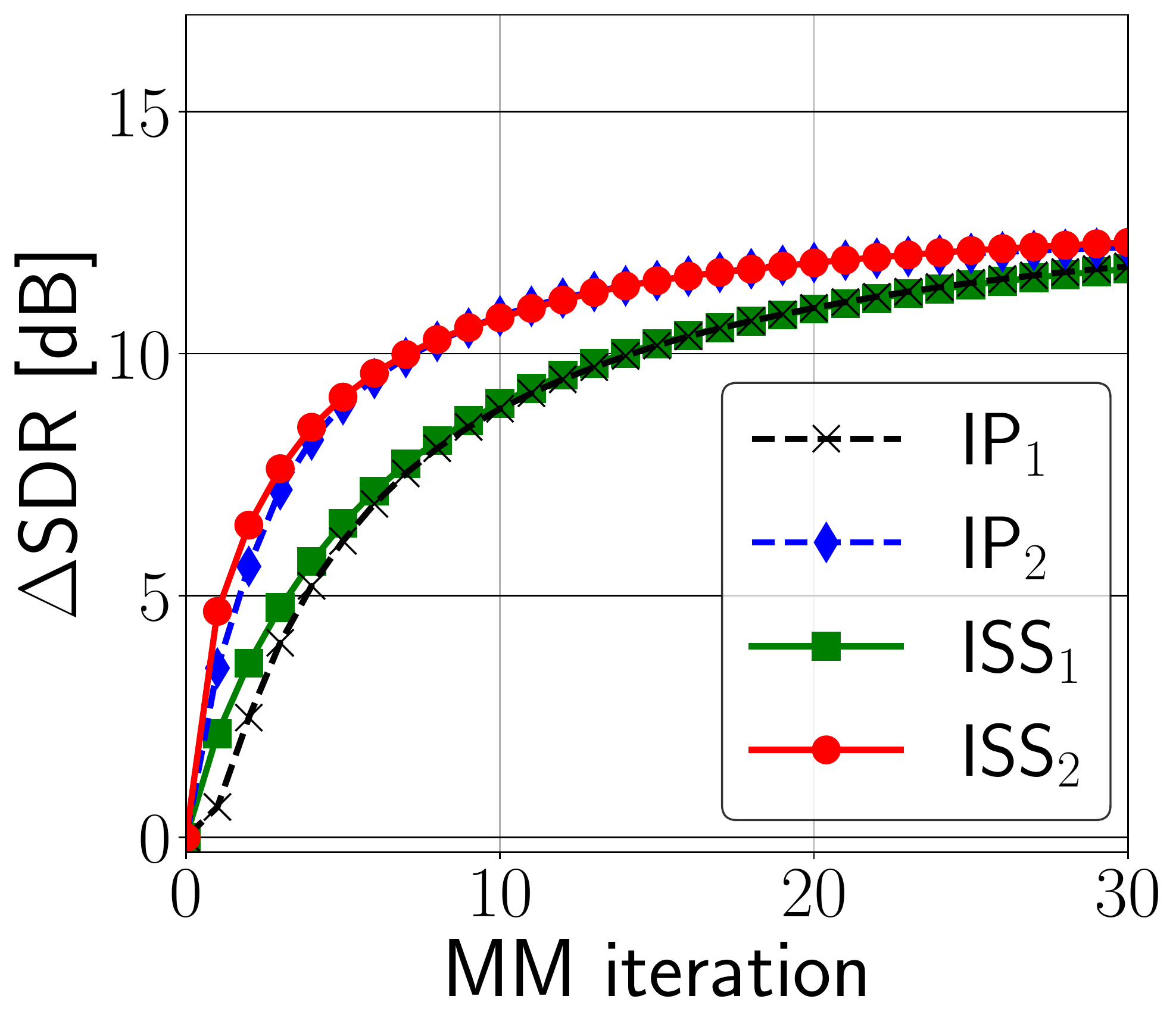}

\vspace{-2 mm}
\caption{$m = 4$}
\end{subfigure}
\hfill
\begin{subfigure}{0.49\linewidth}
\includegraphics[width=1.0\linewidth]{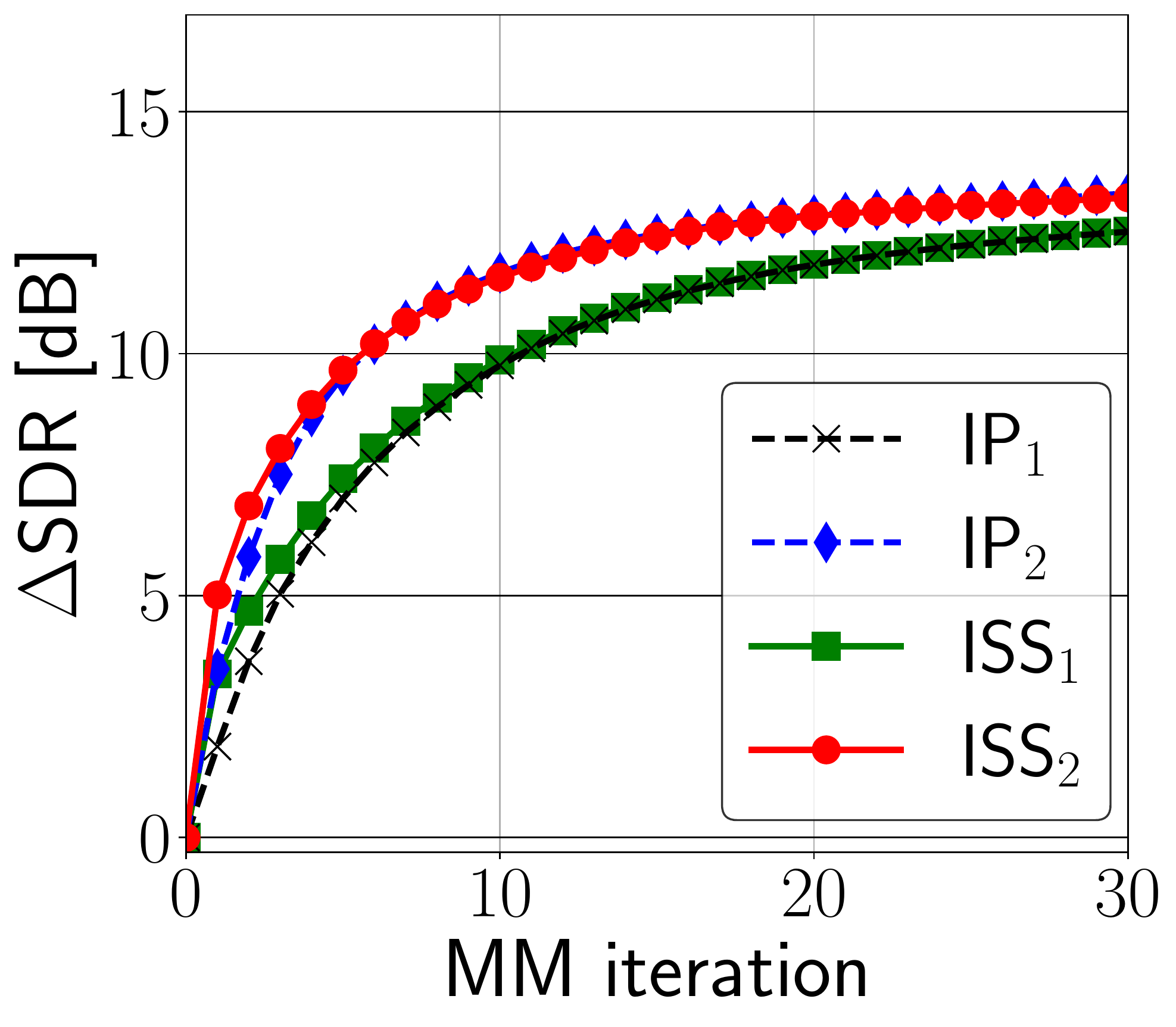}

\vspace{-2 mm}
\caption{$m = 6$}
\end{subfigure}
\vspace{1 mm}
%\hfill

%
\begin{subfigure}{0.49\linewidth}
\includegraphics[width=1.0\linewidth]{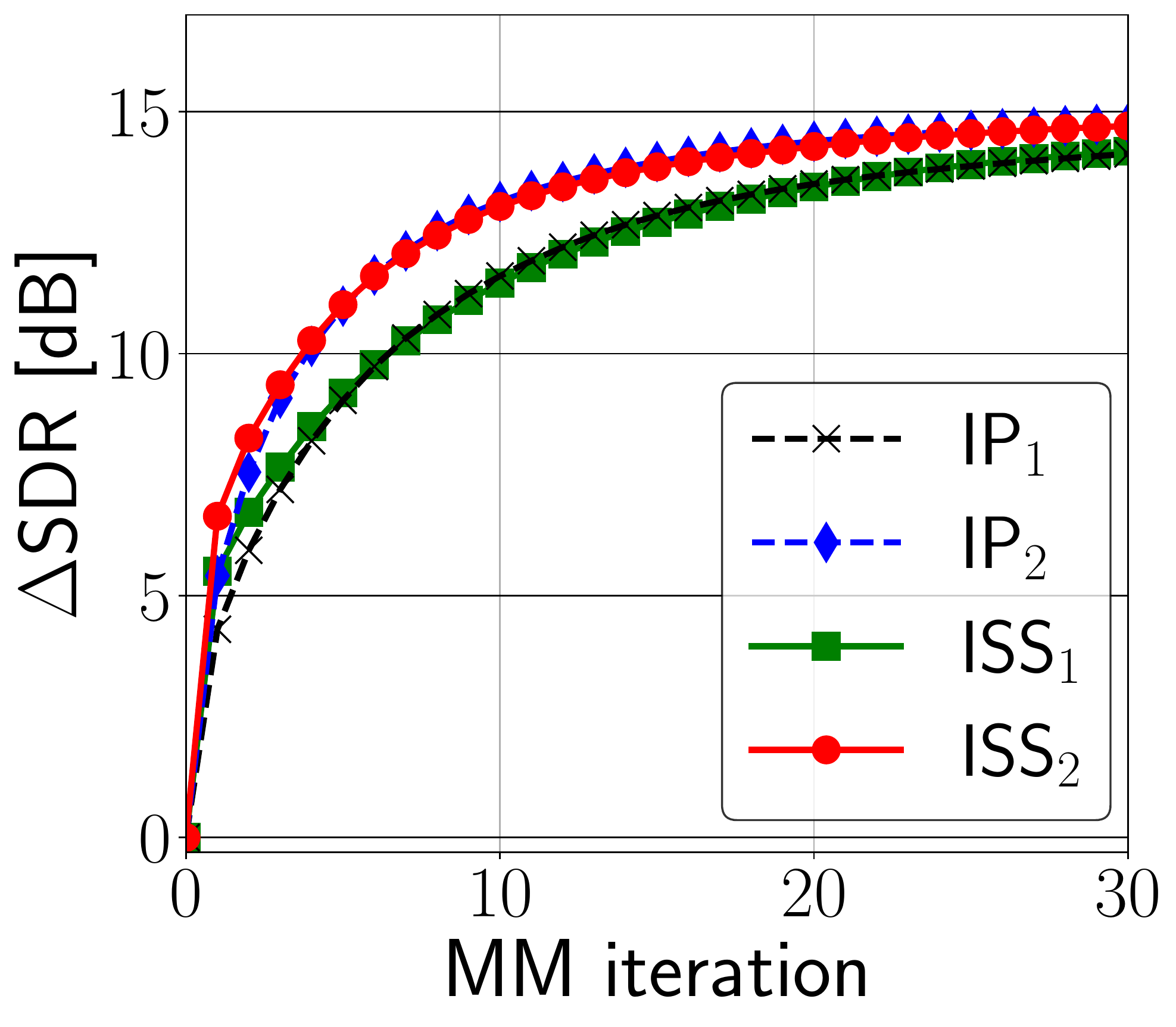}

\vspace{-2 mm}
\caption{$m = 8$}
\end{subfigure}
\hfill
\begin{subfigure}{0.49\linewidth}
\includegraphics[width=1.0\linewidth]{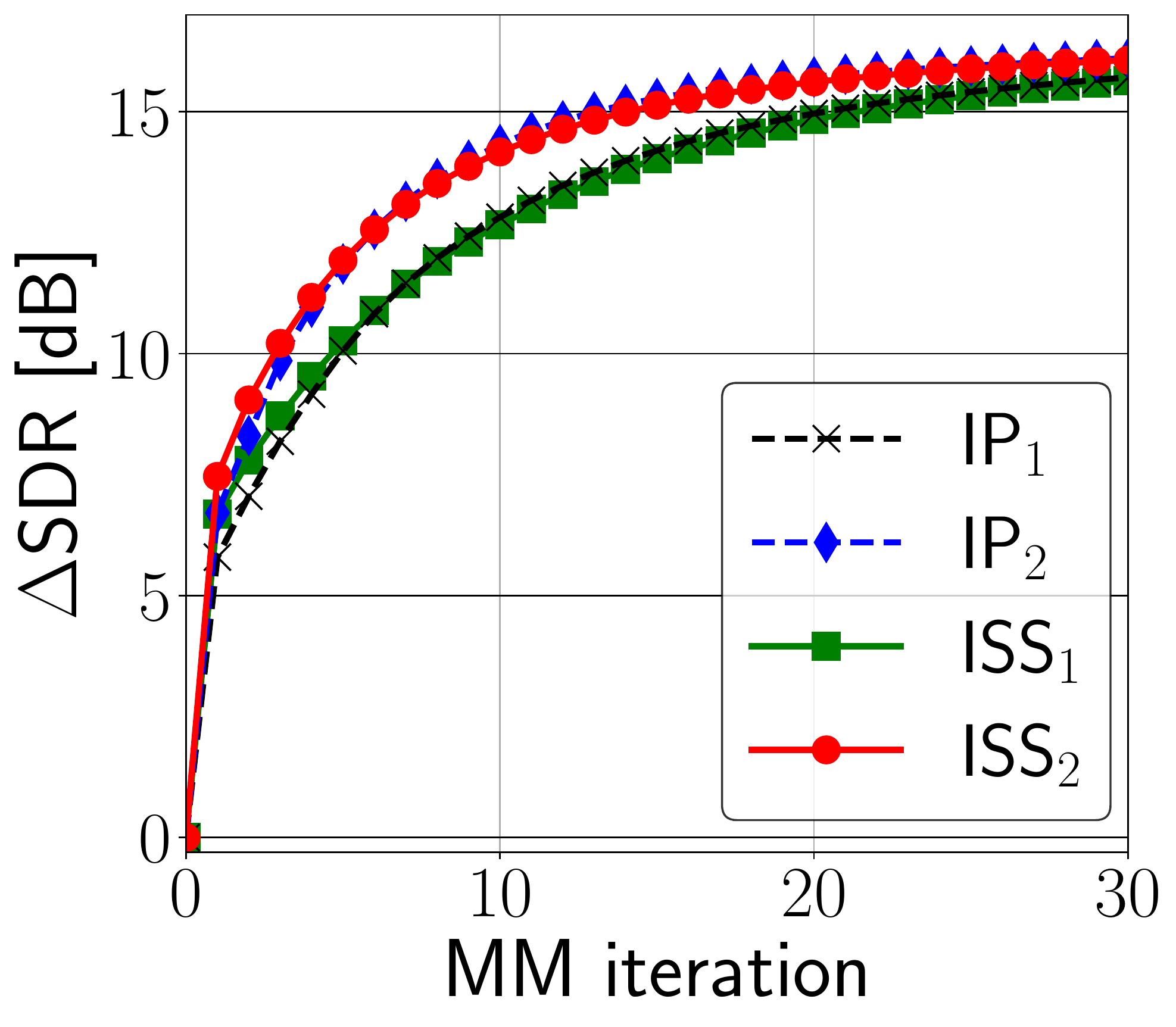}

\vspace{-2 mm}
\caption{$m = 10$}
\end{subfigure}
\caption{
The SDR improvement ($\Delta\text{SDR}$) from the initial SDR as a function of the MM iteration.
The SDRs were averaged over 100 samples.
The average signal length was 13.3 sec.
}
\label{fig:exp-res}
\vspace{-3 mm}
\end{figure}

\section{Conclusion}

As BCD algorithms for the MM-based IVA, $\text{IP}_1$, $\text{IP}_2$, and $\text{ISS}_1$ had been developed.
We here extended $\text{ISS}_1$ to $\text{ISS}_2$ that updates
two columns of the mixing matrix $A \coloneqq W^{-1}$ in each BCD iteration.
Our $\text{ISS}_2$ simultaneously achieves both
(i) the small time complexity of $\text{ISS}_1$ per MM iteration
and
(ii) the fast convergence behavior of $\text{IP}_2$, which was confirmed by the numerical experiments.

%\section*{Acknowledgment}
\bibliographystyle{IEEEtran}
\bibliography{refs}
\end{document}